\newcommand{\Ocal}{\mathcal{O}}
\newtheorem{theorem}{Theorem}
\newtheorem{corollary}{Corollary}[theorem]
\begin{document}

\title{Metric tensor at second perturbation order for
spherically symmetric space-times}


\author{Sergio Mendoza}
\email{sergio@astro.unam.mx}
\affiliation{Instituto de Astronom\'{\i}a, Universidad Nacional Aut\'onoma de M\'exico, AP 70-264, Ciudad de M\'exico 04510, M\'exico}

\date{\today}


\begin{abstract}
  It is shown in this article that if the Einstein Equivalence Principle is
valid on a particular metric theory of gravitation in a spherically symmetric
space-time, then the time metric component is not equal to the negative of
the inverse radial one unless the underlying potential is inversely 
proportional to the radial coordinate.  At the weak field limit of
approximation, a general formula is calculated and applied to
some useful cases.
\end{abstract}

\keywords{General relativity and gravitation; Alternative gravity theories}

\maketitle

\section{Introduction}
\label{introduction}

  Finding solutions to extended metric theories of gravity is a difficult
task, even when simple static and spherical approximations are assumed.
A simple general technique that is often used in the literature to find
solutions in spherical space-times is in which the time component of
the metric equals the negative of the inverse radial one.  This fact is 
motivated by the
validity of this statement in the Schwarzschild solution of general
relativity.  However, the procedure to build Schwarzschild's solution does
not start with such assumption. It follows as a result of the general
solution.

  In this article it is shown that if a metric theory of gravity 
obeys the Einstein Equivalence Principle, then the metric time component is
different from the negative inverse radial one.  By adopting a Post
Newtonian coordinate system of reference and a Post Newtonian gauge it is
obtained the correct value of those metric components at the weak field limit
of approximation in Schwarzschild-like coordinates.  

  The article is organised as follows.  Section~\ref{isotropicspace}
discusses in general terms the Parametrised Post Newtonian (PPN) formalism
at second order of approximation (the weak field limit) in standard Post
Newtonian coordinates -which are equivalent to an isotropic space, and in 
Schwarzschild-like spherical ones.    In Section~\ref{sphericalspace}, the
transformation from the isotropic space to the spherical one is performed
and the metric components are calculated at second order of approximation.
Useful examples of the general result are presented in
Section~\ref{applications} and finally we discuss our results in
Section~\ref{discussion}.

\section{Isotropic space-time.}
\label{isotropicspace}

  The geodesic motion of a massive particle with mass \( m \) is obtained by
minimising the action \citep[see e.g.][]{landau-fields}:

\begin{equation}
   S = - m c \int_a^b{ \mathrm{d} s } = \int_a^b{ \mathcal{L} \mathrm{d}t },
\label{geodesic}
\end{equation}

\noindent with respect to the space-time coordinates \( x^\alpha
\).   In the previous equation, \( \mathrm{d} s = g_{\alpha\beta}
\mathrm{d}x^\alpha \mathrm{d} x^\beta \) represents the interval of space
time for a metric \( g_{\alpha\beta} \) and \( x^\alpha\) are a set of
chosen coordinates, with \( x^0=t \) representing the time coordinate
and \( x^k \) spatial coordinates.  In here and in what follows we use
Einstein's summation convention and Greek space-time indices run from \(
0 \) to \( 4 \) while Latin space ones from \( 1 \) to \( 3 \).  We choose
a metric signature (\( +,-,-,- \)). At the weakest order of approximation,
i.e. when the velocity of light \( c \rightarrow \infty \), the Lagrangian
\( \mathcal{L} \) is given by~\citep[see e.g.][]{landau-fields,willbook}:

\begin{equation}
  \mathcal{L} = - m c^2 + \frac{ 1 }{ 2 } m v^2  - m \phi,
\label{lagrangian}
\end{equation}

\noindent where \( \phi \) represents the Newtonian scalar potential,
which for the particular case of a point mass source \( M \) at the
origin of coordinates is given by:

\begin{equation}
  \phi = - \frac{ G M }{ r },
\label{newtonian-potential}
\end{equation}

\noindent in which \( G \) is Newton's gravitational constant and \( r \)
the radial distance to the origin. Combining equations~\eqref{lagrangian}
and~\eqref{geodesic}, the time component \( g_{00} \) of the metric \(
g_{\alpha\beta} \) to \( \Ocal(1/c^2) \) is given by:

\begin{equation}
  g_{00} = {}^{(0)}g_{00} + {}^{(2)}g_{00} = 1 + \frac{ 2 \phi }{ c^2 }.
\label{weakest}
\end{equation}

  In the previous equation and in what follows we will
refer to perturbation orders as \( \Ocal(2) \), \( \Ocal(4)
\) meaning \( \Ocal(1/c^2) \), \( \Ocal(1/c^4) \) and so on.
In equation~\eqref{weakest} and in what follows, the left superindex
parenthesis in a particular quantity represents the perturbation order
at which that particular quantity is approximated.
The fact that \( {}^{(2)}g_{00} = 2 \phi / c^2 \) is a manifestation of
the validity of the Einstein Equivalence Principle~\citep{willbook}.

  In a Post-Newtonian system of reference with a standard Post-Newtonian
gauge at \( \mathcal{O}(2) \) of approximation, the metric is isotropic
and of the form~\citep{willbook}:

%
%

\begin{equation}
  \mathrm{d}s^2 = g_{00} c^2 \mathrm{d}t^2 + \Lambda(\boldsymbol{x})
     \mathrm{d}l^2,
\label{isotropic-tmp}
\end{equation}

\noindent where the square of the three dimensional length element 
\( \mathrm{d}l^2 = \mathrm{d} x_k \mathrm{d} x^k \) and \( \boldsymbol{x} \) 
is a three dimensional coordinate vector. In spherical-like
coordinates (\(t,\tilde{r},\theta,\varphi\)) the previous equation is:

\begin{equation}
  \mathrm{d}s^2 = g_{00} c^2 \mathrm{d}t^2 + \Lambda(\boldsymbol{x}) \left(
  \mathrm{d} \tilde{r} + \tilde{r}^2 \mathrm{d}\Omega \right),
\label{isotropic}
\end{equation}

\noindent where the squared angular displacement \( \mathrm{d} \Omega^2 :=
\mathrm{d} \theta^2 + \sin^2 \theta \mathrm{d} \varphi^2 \) for the polar and
azimuthal angles \( \theta \) and \( \varphi \) respectively.  The radial
distance is represented by \( \tilde{r} \).

  At \( \Ocal(0) \) the components of the metric \( g_{00} = 1 \) and \(
\Lambda = - 1 \).  The next \( \Ocal(2) \) correction for the time
component of the metric is given by equation~\eqref{weakest}.  To find the
\( \Ocal(2) \) correction to the function \( \Lambda(\boldsymbol{x}) \)
note that so far we have not introduced any other function into the
discussion but the function \( \phi(r) \).  As such, it is expected that 
\( {}^{(2)}\Lambda \propto \phi \) and so the line
element~\eqref{isotropic} can be rewritten as:

\begin{equation}
  \mathrm{d}s^2 = \left( 1 + \frac{ 2 \phi(\tilde{r})}{ c^2 } \right)
    c^2 \mathrm{d}t^2 - \left( 1 - \frac{ 2 \gamma \phi(\tilde{r}) }{
    c^2 } \right) \left(
  \mathrm{d} \tilde{r} + \tilde{r}^2 \mathrm{d}\Omega \right),
\label{isotropic-final}
\end{equation}

\noindent Notice that the function \( {}^{(2)}\Lambda \) has been written
to resemble as much as possible the function \( {}^{(2)}g_{00} \) and that
is the reason for introducing the constant parameter \( \gamma \) which
is the first PPN parameter as discussed in Section~\ref{introduction}.

  The PPN parameter \( \gamma \) measures the amount of curvature of space
\citep{willbook} and its precise value must be obtained experimentally.
A viable theory of gravity must converge to that value at its \(
\mathcal{O}(2) \) order of approximation.  The deflection of light
observed in Solar System experiments yields a value \( \gamma = 1 \). This
value is fully predicted by general relativity at \( \mathcal{O}(2) \)
perturbation order of the theory.  For the case of MOdified theories
of gravity (MOND), it turns out that the same value is also obtained
in observations of the deflection of light in individual, groups and
clusters of galaxies~\citep{mendoza13,mendozaolmo,mendoza15}.

\section{Spherical symmetry}
\label{sphericalspace}

  Let us make a coordinate transformation to spherical coordinates 
(\(t, r, \theta, \varphi\)) so that
the line element~\eqref{isotropic-final} is given by:

\begin{equation}
  \mathrm{d} s^2 = \left( 1 + \frac{ 2 \phi(r) }{ c^2 } \right) c^2
  \mathrm{d}t^2 - g_{11} \mathrm{d}r^2 - r^2 \mathrm{d} \Omega^2.
\label{spherical}
\end{equation}

\noindent In the previous equation we have left the time component of the
metric \( g_{00} = \left( 1 + 2 \phi(r) / c^2 \right) \) unchanged in order to
comply with the lowest perturbation order obtained from the
action~\ref{geodesic}, which is coherent to the Einstein
Equivalence Principle.  The angular displacement \( \mathrm{d} \Omega \)
has been left unchanged with the transformation since the coordinates \(
\theta \) and \( \varphi \) are physically the same between both systems of
reference.

  From equation~\eqref{spherical} and~\eqref{isotropic-final} it follows
that:

\begin{gather}
  r = \left( 1 - \frac{ \gamma \phi(\tilde{r}) }{c^2} \right)  \tilde{r},
  			\label{t01} \\
  g_{11} \mathrm{d}r^2 = - \left( 1 - \frac{ 2 \gamma \phi(\tilde{r}) }{c^2} 
     \right) \mathrm{d}\tilde{r}^2,
     			\label{t02}
\end{gather}

\noindent at \( \Ocal{(2)} \), and so, relation~\eqref{t02} is:

\begin{equation}
  \mathrm{d}r = \left( 1 - \frac{ \gamma \phi(\tilde{r}) }{ c^2 } \right) 
     \mathrm{d}\tilde{r} - \frac{ \gamma \tilde{r} \mathrm{d}
     \phi(\tilde{r}) }{ c^2 }.
\label{diff}
\end{equation}

\noindent Substitution of this previous result into
equation~\eqref{t02} yields at \( \Ocal(2) \):

\begin{equation}
  g_{11}(r) = - \left( 1 + \frac{ 2 \gamma \tilde{r} }{ c^2 } \frac{
  \mathrm{d} \phi(\tilde{r}) }{ \mathrm{d} \tilde{r} }  \right),
\label{g11}
\end{equation}

\noindent where we have used the fact that \( \mathrm{d} \phi(\tilde{r}) 
= \mathrm{d} \tilde{r} \, \mathrm{d} \phi(\tilde{r}) / \mathrm{d} \tilde{r}
\). The function \( g_{11}(r) \) in the previous equation is only a
function of the spherical radial coordinate \( r \) according to
equation~\eqref{t01}.  To see this in more detail, note that:

\begin{displaymath}
  \phi( \tilde{r} ) = \phi \left( r + \frac{ \gamma r \phi(\tilde{r})
    }{ c^2 } \right),
\end{displaymath}

\noindent and so, performing a Taylor expansion of the previous function up to
\( \Ocal(2) \) it follows that:

\begin{equation}
  \phi( \tilde{r} ) = \phi( r ) + \frac{ \gamma r \phi(\tilde{r}) 
    }{ c^2 } \frac{ \mathrm{d} \phi(r) }{ \mathrm{d} r }.
\end{equation}

\noindent Direct substitution of this relation into equation~\eqref{g11}
and using relations~\eqref{t01} and~\eqref{diff} it follows that:

\begin{equation}
 \begin{split}
  \frac{ \mathrm{d} \phi(\tilde{r}) }{ \mathrm{d} \tilde{r} } &= 
    \frac{ \mathrm{d} \phi(r) }{ \mathrm{d} \tilde{r} } + 
    \frac{ \mathrm{d} }{ \mathrm{d} \tilde{r} } \left[ \frac{ \gamma r
    \phi( \tilde{r} ) }{ c^2 } \right] 
    					\\
  &= \left[ 1 - \frac{ \gamma \phi(\tilde{r}) }{ c^2 } - \frac{ \gamma
    \tilde{r} }{ c^2 } \frac{ \mathrm{d} \phi( \tilde{r} ) }{ \mathrm{d}
    \tilde{r} } \right]  
    \left\{ \frac{ \mathrm{d} \phi(r) }{ \mathrm{d} \tilde{r} } + 
      \frac{ \mathrm{d} }{ \mathrm{d} \tilde{r} } \left[ \frac{ \gamma r
      \phi( \tilde{r} ) }{ c^2 } \right] 
    \right\},
  \end{split}
\end{equation}

\noindent which at \( \Ocal(2) \) perturbation order yields:

\begin{displaymath}
  \frac{ \mathrm{d} \phi( \tilde{r} ) }{ \mathrm{d} \tilde{r} } \left(
    1 + \frac{ \gamma r }{ c^2 } \frac{ \mathrm{d} \phi(\tilde{r}) }{ 
    \mathrm{d} r } \right) = \frac{ \mathrm{d} \phi(r) }{ \mathrm{d} r} +
    \frac{ \gamma r \phi }{ c^2 } \frac{ \mathrm{d}^2 \phi(r) }{ \mathrm{d}
    r^2 },
\end{displaymath}

\noindent and so:

\begin{equation}
  \frac{ \mathrm{d} \phi( \tilde{r} ) }{ \mathrm{d} \tilde{r} } =
  \frac{ \mathrm{d} \phi(r) }{ \mathrm{d} r } + \frac{ \gamma r }{ c^2 }
  \left\{ - \left( \frac{ \mathrm{d} \phi(r) }{ \mathrm{d} r } \right)^2 
  + \phi \frac{ \mathrm{d}^2 \phi(r) }{ \mathrm{d} r^2 } \right\}.
\label{der}
\end{equation}

  The substitution of this last result into equation~\eqref{g11} at the
same \( \Ocal(2) \) perturbation order  yields:

\begin{equation}
  g_{11}(r)  =  - \left( 1 + \frac{ 2 \gamma r }{ c^2 } \frac{ \mathrm{d}
    \phi(r) }{ \mathrm{d} r }    \right) 
\end{equation}

\noindent The previous results can be stated formally in the following way:

\begin{theorem}
  The general form of the metric in spherical coordinates for 
any metric theory of gravity at \( \Ocal(2) \) perturbation order can be
written as:

\begin{equation}
  \mathrm{d} s^2 = \left( 1 + \frac{ 2 \phi(r) }{ c^2 } \right) c^2
    \mathrm{d}t^2 - \left( 1 + \frac{ 2 \gamma r }{ c^2 } \frac{ \mathrm{d}
    \phi(r) }{ \mathrm{d} r } \right) \mathrm{d}r^2 - r^2 \mathrm{d}
    \Omega^2,
\label{metric-final01}
\end{equation}

\noindent or in Schwarzschild-like space-time form:

\begin{equation}    				\\
  \mathrm{d} s^2 = \left( 1 + \frac{ 2 \phi(r) }{ c^2 } \right) c^2 
    \mathrm{d}t^2
    - \frac{ \mathrm{d} r^2 }{ \left( 1 - \frac{ 2 \gamma r }{ 
    c^2 } \frac{ \mathrm{d} \phi(r) }{ \mathrm{d} r } \right) }- 
    r^2 \mathrm{d} \Omega^2.
\label{metric-final02}
\end{equation}

\label{theorem01}
\end{theorem}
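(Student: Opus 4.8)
The plan is to derive the spherical line element as a coordinate transformation of the isotropic form~\eqref{isotropic-final}, whose time component is already pinned down by the Einstein Equivalence Principle and whose spatial sector carries the single free PPN parameter $\gamma$ together with the otherwise arbitrary potential $\phi$ --- this is what will make the statement hold for \emph{any} metric theory. Demanding the Schwarzschild-like shape~\eqref{spherical}, in which the areal radius multiplies $\mathrm{d}\Omega^2$, forces the angular sectors of the two line elements to agree; matching them fixes the radial map $r = \left(1 - \gamma\phi(\tilde{r})/c^2\right)\tilde{r}$ to $\Ocal(2)$, i.e.\ relation~\eqref{t01}, while matching the radial pieces supplies the defining condition~\eqref{t02} for $g_{11}$.

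First I would differentiate~\eqref{t01} to write $\mathrm{d}r$ in terms of $\mathrm{d}\tilde{r}$ and $\mathrm{d}\phi$, obtaining~\eqref{diff}, and insert this into~\eqref{t02}. Keeping only terms through $\Ocal(2)$ then isolates $g_{11}$ in the form~\eqref{g11}, but still expressed through the tilded derivative $\mathrm{d}\phi(\tilde{r})/\mathrm{d}\tilde{r}$.

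The main obstacle is that the theorem requires $g_{11}$ as a bona fide function of the spherical coordinate $r$, whereas $\tilde{r}$ and $r$ already differ at $\Ocal(2)$. I would dispose of this by writing $\phi(\tilde{r}) = \phi\left(r + \gamma r\phi/c^2\right)$ and Taylor-expanding about $r$, then differentiating and iterating the result to trade every $\tilde{r}$-derivative for an $r$-derivative. The delicate part is the order bookkeeping: cross terms such as $\gamma r(\mathrm{d}\phi/\mathrm{d}r)^2$ and $\gamma\phi\,\mathrm{d}^2\phi/\mathrm{d}r^2$ sit exactly at the retained order and must be kept, while every product that is formally $\Ocal(4)$ must be discarded, so as to land precisely on~\eqref{der}.

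Finally, substituting~\eqref{der} back into~\eqref{g11} collapses the correction into the compact expression written entirely in terms of $r$ and $\mathrm{d}\phi(r)/\mathrm{d}r$, which is the radial coefficient of~\eqref{metric-final01}. Inverting that coefficient to the same order through $1/(1-x)\approx 1+x$ then recasts the radial term in the Schwarzschild-like denominator of~\eqref{metric-final02}, completing the argument.
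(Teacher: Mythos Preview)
Your proposal is correct and follows essentially the same route as the paper: match the angular sectors to obtain~\eqref{t01}, match the radial sectors to obtain~\eqref{t02}, differentiate~\eqref{t01} to get~\eqref{diff}, substitute into~\eqref{t02} to reach~\eqref{g11}, then Taylor-expand $\phi(\tilde{r})$ about $r$ to derive~\eqref{der} and re-express $g_{11}$ purely in terms of $r$. The only nuance worth flagging is that the cross terms you rightly keep in~\eqref{der} are $\Ocal(2)$ corrections to $\mathrm{d}\phi/\mathrm{d}\tilde{r}$ itself, but once multiplied by the prefactor $2\gamma\tilde{r}/c^{2}$ in~\eqref{g11} they become $\Ocal(4)$ and drop, which is why the final radial coefficient collapses so cleanly; your phrase ``collapses the correction'' shows you anticipate this.
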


\begin{corollary}
  The only potential that satisfies a Schwarzschild-like behaviour for 
which \( g_{00} = - 1 / g_{11}  \) is one that satisfies \( \phi \propto
1 / r^{1/\gamma} \).  
\label{corollary01}
\end{corollary}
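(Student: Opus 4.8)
The plan is to impose the stated condition directly on the two metric components supplied by Theorem~\ref{theorem01} and reduce it to a first-order ordinary differential equation for \( \phi(r) \). Reading off the Schwarzschild-like form~\eqref{metric-final02}, the radial coefficient is \( g_{11} = - 1 / \left( 1 - (2 \gamma r / c^2)\, \mathrm{d}\phi / \mathrm{d}r \right) \), so that \( -1/g_{11} = 1 - (2 \gamma r / c^2)\, \mathrm{d}\phi / \mathrm{d}r \), whereas \( g_{00} = 1 + 2\phi / c^2 \). Setting \( g_{00} = -1/g_{11} \) and cancelling the leading unity common to both sides leaves, at \( \Ocal(2) \), the relation \( \phi = - \gamma r \, \mathrm{d}\phi / \mathrm{d}r \).

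First I would recognise this as a separable equation and write \( \mathrm{d}\phi / \phi = -(1/\gamma)\, \mathrm{d}r / r \). Integrating both sides gives \( \ln \phi = -(1/\gamma) \ln r + \mathrm{const} \), hence \( \phi = A\, r^{-1/\gamma} \) for an integration constant \( A \), which is exactly the claimed proportionality \( \phi \propto 1 / r^{1/\gamma} \). As an immediate consistency check, the choice \( \gamma = 1 \) returns \( \phi \propto 1/r \), reproducing the Newtonian point-mass potential~\eqref{newtonian-potential} and the familiar Schwarzschild identity \( g_{00} = -1/g_{11} \), which is the expected limiting behaviour.

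The step I would watch most carefully is the perturbative bookkeeping rather than the integration, which is routine. The identity \( g_{00} = -1/g_{11} \) must be enforced only to \( \Ocal(2) \): one should expand the product \( g_{00}\, g_{11} \) and truncate at the working order instead of treating the condition as exact, confirming that the \( 2\phi / c^2 \) term in \( g_{00} \) and the \( 2 \gamma r\, \phi' / c^2 \) term in the radial coefficient combine to give the clean balance \( \phi + \gamma r\, \mathrm{d}\phi / \mathrm{d}r = 0 \), with no spurious higher-order remainder. The main obstacle is thus conceptual—verifying that the resulting differential equation is genuinely consistent at \( \Ocal(2) \)—after which the uniqueness of the solution follows directly from the uniqueness of solutions to this separable linear equation.
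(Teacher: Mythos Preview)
Your proposal is correct and follows the same route as the paper: impose \( g_{00} = -1/g_{11} \) on the components of Theorem~\ref{theorem01}, obtain the separable relation \( \gamma\, \mathrm{d}\ln\phi = -\,\mathrm{d}\ln r \) (equivalently \( \phi + \gamma r\, \mathrm{d}\phi/\mathrm{d}r = 0 \)), and integrate to \( \phi \propto r^{-1/\gamma} \). The extra care you flag about truncating at \( \Ocal(2) \) is appropriate but does not alter the argument.
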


\begin{proof}
  The condition \( g_{00} = - 1 / g_{11}  \) applied to the
interval~\eqref{metric-final02}  yields \( \gamma \mathrm{d} \ln \phi   = -
\mathrm{d} \ln r  \), i.e. \( \phi \propto 1 / r^{1/\gamma} \).
\end{proof}

  As mentioned at the end of Section~\ref{isotropic}, different Solar
System experiments show that \( \gamma = 1 \) and
so the potential \( \phi \propto 1 / r \) is the only one that satisfies
the conditions of the previous Corollary.   This corollary is fully
satisfied in general relativity, but as we will see in the next Section, it
is not satisfied in general metric theories of gravity.

\section{Applications}
\label{applications}

  In this section we discuss some of the applications which are  important
for different cases in many extended theories of gravity which do not
necessarily have a Newtonian potential weak-field limit of approximation.

\subsection{Point particle Newtonian-like potentials.}
\label{newtonian}

  Let us assume that:

\begin{equation}
  \phi(r) = A r^k, \qquad \text{with} \qquad k \neq 0,
\label{ntw}
\end{equation}

\noindent so that:

\begin{displaymath}
  \frac{ \mathrm{d} \phi(r) }{ \mathrm{d} r } = A k r^{k-1},
\end{displaymath}

\noindent and so, from the results of Theorem~\ref{theorem01}
it follows that:

\begin{equation}
  \begin{split}
  \mathrm{d}s^2 &= \left( 1 + \frac{ 2 A r^k }{ c^2 } \right) c^2
    \mathrm{d}t^2 + \left( 1 + \frac{ 2 \gamma A k r^k }{c^2} \right)
    \mathrm{d} r^2 + r^2 \mathrm{d} \Omega^2,
     		 \\
     &= \left( 1 + \frac{ 2 A r^k }{ c^2 } \right) c^2
    \mathrm{d}t^2 + \frac{ \mathrm{d} r^2 }{ \left( 1 -
      \frac{ 2 \gamma A k r^k }{c^2} \right) } + r^2 \mathrm{d} \Omega^2
  \end{split}
\label{g11nt}
\end{equation}

\noindent at perturbation order \( \Ocal(2) \).

  For the case of a Newtonian potential produced by a point mass particle
located at the origin of coordinates \( k = -1 \) and \( A = -G M \), where
\( G \) is Newton's constant of gravitation and \( M \) is the mass of the
particle producing the gravitational field.  With this and using the
results of equation~\eqref{g11nt} and~\eqref{spherical} it follows that:

\begin{equation}
  \begin{split}
  \mathrm{d} s^2 &= \left( 1 - \frac{ 2 G M }{ r c^2 } \right) c^2
    \mathrm{d}t^2 - \left( 1  + \frac{ 2 \gamma G M }{ r c^2 } \right)
    \mathrm{d}r^2 - r^2 \mathrm{d} \Omega^2.
    		\\
    &= \left( 1 - \frac{ 2 G M }{ r c^2 } \right) c^2
    \mathrm{d}t^2 - \frac{ \mathrm{d}r^2 }{ \left( 1  - 
    \frac{ 2 \gamma G M }{ r c^2 } \right)} - r^2 \mathrm{d} \Omega^2.
  \end{split}
\label{schwarzschild}
\end{equation}

\noindent This reproduces Schwarzschild's metric at \( \Ocal(2) \) of
approximation if \( \gamma = 1 \), which as mentioned before is fulfilled
by a wide number of experiments.

\subsection{Logarithmic potential}
\label{logarithmic}

  Logarithmic potentials appear on MOdified Newtonian Dynamics (MOND)
gravitation~\citep[see e.g.][and references therein]{mendoza15}.
The deep MOND regime is obtained when the acceleration exerted on a
test particle is given by:

\begin{equation}
  a = \frac{ \mathrm{d} \phi(r)  }{ \mathrm{d} r} = 
    - \frac{ \sqrt{G M a_0} }{ r }, 
\label{mond-acc}
\end{equation}

\noindent and so, the potential is given by:

\begin{equation}
  \phi(r) = - \sqrt{ G M a_0 } \ln( r /r_\star ),
\label{mond-pot}
\end{equation}

\noindent where \( r_\star \) is an arbitrary distance and \( a_0 \approx
1.2 \times 10^{-10} \textrm{m} / \textrm{s}^2 \) is Milgrom's acceleration
constant.  Direct substitution of the previous two equations into
equation~\eqref{g11} yields at perturbation order \( \Ocal(2) \):

\begin{equation}
  \begin{split}
  \mathrm{d} s^2 &= \left( 1 - \frac{ 2 \sqrt{ G M a_0 } }{ c^2 } 
    \, \ln( r / r_\star  ) \right) c^2 \mathrm{d}t^2 -
    				\\
    & \left( 1 - \frac{ 2 \gamma }{ c^2 } \sqrt{ G M a_0 } \right)
    \mathrm{d}r^2 - r^2 \mathrm{d} \Omega^2,
    				\\
    &= \left( 1 - \frac{ 2 \sqrt{  G M a_0 } }{ c^2 } 
    \, \ln( r / r_\star  ) \right) c^2 \mathrm{d}t^2 -
    				\\
    & \frac{ \mathrm{d} r^2 }{ \left( 1 + \frac{ 2 \gamma }{ c^2 } 
      \sqrt{ G M a_0 } \right) } - r^2 \mathrm{d} \Omega^2.
  \end{split}
\label{metric-mond}
\end{equation}

\noindent The previous metric was obtained by \citet{mendozaolmo} and
it was verified by \citet{mendoza13} that the parameter \( \gamma =
1 \) from observations of the bending of light of individual, groups
and clusters of galaxies.

\subsection{Yukawa-like potential}
\label{yukawa-like}

  For the case of analytic \( f(R) \) theories of gravity it has been shown
by \citet{capozziello12} that:

\begin{equation}
  \phi(r) = - \frac{ G M }{r } \frac{ 1 + \delta e^{r/\lambda} }{ \delta +
    1 }, 
\label{yukawa-pot}
\end{equation}

\noindent and so:

\begin{equation}
  \frac{ \mathrm{d} \phi(r) }{ \mathrm{d} r}  = \frac{ G M }{ r \left(
   \delta + 1 \right)  } \left\{ \frac{ 1 }{ r } + \delta e^{r/\lambda} 
   \left[ \frac{ 1 }{ r } + \frac{ 1 }{ \lambda} \right] \right\}.
\label{yukawa-der}
\end{equation}

\noindent Substitution of these last two expressions on the results of
Theorem~\ref{theorem01} yields:

\begin{equation}
\begin{split}
 \mathrm{d}s^2 &=  \left\{ 1 - \frac{ 2 G M }{ r c^2 
   \left( \delta + 1 \right) } \left(  1 + \delta e^{r/\lambda}
   \right) \right\} c^2 \mathrm{d} t^2 -
   			\\
   &\left\{ 1 + \frac{ 2 \gamma G M }{  r c^2 \left(
   \delta + 1 \right)  } \left( 1 + \delta e^{-r/\lambda} \right)
   + \frac{ 2 \gamma GM  }{ \lambda \left( 1 + 
   \delta \right) } \delta e^{-r/\lambda} \right\} \mathrm{d} r^2  -
   			\\
   &r^2 \mathrm{d} \Omega^2,
   			\\
   &=  \left\{ 1 - \frac{ 2 G M }{ r c^2 
   \left( \delta + 1 \right) } \left(  1 + \delta e^{r/\lambda}
   \right) \right\} c^2 \mathrm{d} t^2 -
   			\\
   & \frac{ \mathrm{d} r^2 }{ \left\{ 1 - \frac{ 2 \gamma G M }{  r c^2 \left(
   \delta + 1 \right)  } \left( 1 + \delta e^{-r/\lambda} \right)
   - \frac{ 2 \gamma GM  }{ \lambda \left( 1 + 
   \delta \right) } \delta e^{-r/\lambda} \right\} }  -
   			\\
   &r^2 \mathrm{d} \Omega^2.
\end{split}
\label{yuka-final01}
\end{equation}

\noindent or equivalently:

\begin{equation}
\begin{split}
 \mathrm{d} &s^2 =  \left\{ 1 - \frac{ 2 G M }{ r c^2 
   \left( \delta + 1 \right) } \left(  1 + \delta e^{r/\lambda}
   \right) \right\} c^2 \mathrm{d} t^2 -
   			\\
   & \frac{ \mathrm{d} r^2 }{ \left\{ 1 - \frac{ 2 \gamma G M }{  c^2 \left(
   \delta + 1 \right)  } \left[ \frac{ 1 }{ r } + \delta e^{r/\lambda} 
   \left( \frac{ 1 }{ r } + \frac{ 1 }{ \lambda} \right) \right]
  \right\} } + r^2 \mathrm{d} \Omega^2.
\end{split}
\label{yuka-final02}
\end{equation}

  The radial component of the metric in the previous equation does not
converge to the results presented by
\citet{capozziello07,capozziello11,capozziello12,demartino14,demartino18,delaurentis18,cruz-osorio21}.
This fact occurs because in these works authors assumed that the
time and radial component of the metric satisfy \( g_{00} = - 1/ g_{11} \)
which according to the results of Corollary 1.1 is only valid for
potentials \( \phi(r) \propto 1 / r \) at \( \mathcal{O}(2) \) of
approximation for \( \gamma = 1 \).

\section{Discussion}
\label{discussion}

  The intention of this article has been to show that if the Einstein
Equivalence Principle is to be valid in a spherically symmetric space-time, 
then the interval:

\begin{equation}
  \mathrm{d} s^2 \neq \left( 1 + \frac{ 2 \phi(r) }{ c^2 } \right) c^2 
    \mathrm{d}t^2
    - \frac{ \mathrm{d} r^2 }{ \left( 1  + \frac{ 2 \phi(r) }{ c^2 }
      \right) } - r^2 \mathrm{d} \Omega^2,
\label{wrong-metric}
\end{equation}

\noindent for any metric theory of gravity in spherical Schwarzschild-like
coordinates (\(t,r,\theta,\varphi\)).  This fact follows from the remarks
of Theorem~\ref{theorem01} and  Corollary~\ref{corollary01} that were
calculated at \( \mathcal{O}(2) \) of approximation, and hence imply the
general result of equation~\eqref{wrong-metric} at any other order of
approximation.

\section*{Acknowledgements} 
This work was supported by  PAPIIT DGAPA-UNAM (IN110522) and 
CONACyT (26344).

\bibliographystyle{apsrev4-2}
\bibliography{mendoza}

\end{document}